\newtheorem{theorem}{Theorem}
\newtheorem{lemma}{Lemma}
\newtheorem*{law*}{Law}
\theoremstyle{definition}
\newtheorem{definition}{Definition}
\newcommand{\piu}%
{\textrm{\greektext p}}
\newcommand{\eu}%
{\ensuremath{\mathrm{e}}}
\newcommand{\iu}%
{\ensuremath{\mathrm{i}}}
\providecommand{\newoperator}[3]{%
\newcommand*{#1}{\mathop{#2}#3}}
\newcommand{\tran}%
{\mathsf{T}}
\newcommand{\herm}%
{\textsf{H}}
\newcommand{\deltau}%
{\textrm{\greektext d}}
\newcommand{\Deltau}%
{\textrm{\greektext D}}
\providecommand*{\diff}%
{\@ifnextchar^{\DIfF}{\DIfF^{}}}
\def\DIfF^#1{%
\mathop{\mathrm{\mathstrut d}}
\nolimits^{#1}\gobblespace}
\def\gobblespace{%
\futurelet\diffarg\opspace}
\def\opspace{%
\let\DiffSpace\!%
\ifx\diffarg(%
\let\DiffSpace\relax
\else
\ifx\diffarg[%
\let\DiffSpace\relax
\else
\ifx\diffarg\{%
\let\DiffSpace\relax
\fi\fi\fi\DiffSpace}
\newcommand{\normalx}[3]{\ensuremath{\mathscr{N}\left(#1 \mid #2,#3\right)}\xspace}
\renewcommand{\piu}{\uppi}
\renewcommand{\deltau}{\updelta}
\let\originalpartial\partial
\let\partial\relax
\newrobustcmd*{\partial}{\text{\rotatebox[origin=t]{10}{\scalebox{0.95}[1]{\ensuremath{\originalpartial}}}}\hspace{-0.05em}}
\newcommand{\ee}{\ensuremath{\text{e}}\xspace}
\newcommand{\planckbar}{\ensuremath{\hbar}\xspace}
\newcommand{\setConstant}{\ensuremath{\mathscr{C}}\xspace}
\newcommand{\setDecreasing}{\ensuremath{\mathscr{D}}\xspace}
\newcommand{\setIncreasing}{\ensuremath{\mathscr{I}}\xspace}
\newcommand{\setOscillating}{\ensuremath{\mathscr{O}}\xspace}
\newcommand{\hessian}{\ensuremath{\mathbf{H}}\xspace}
\newcommand{\Identitymatrix}{\ensuremath{\mathbf{I}}\xspace}
\newcommand{\Sigmabold}{\ensuremath{\bm{\mathrm{\Sigma}}}\xspace}
\newcommand{\kernBeforeIntegral}{\ensuremath{\kern-0.17em}\xspace}
\newcommand{\mechanics}{mechanics\xspace}
\newcommand{\eqdefA}{=\xspace}
\newcommand{\entropyS}{\ensuremath{\mathrm{S}\xspace}}
\newcommand{\QCurve}{QCurve\xspace}
\newcommand{\QCurves}{QCurves\xspace}
\newcommand{\setAll}{\ensuremath{\mathscr{E}}\xspace}
\newcommand{\CPT}{\ensuremath{\mathrm{CPT}}\xspace}
\newcommand{\state}{\ensuremath{\text{state}}\xspace}
\definecolor{myblue}{rgb}{0.25, 0.6, 0.8}
\definecolor{myred}{rgb}{0.9, 0.5, 0.0}
\begin{document}

\thispagestyle{empty}
\newpage

\title{
Quantum Entropy Evolution}
\author{{Davi Geiger} and {Zvi M.\ Kedem}}
\affiliation{ Courant Institute of Mathematical Sciences\\
  New York University, New York, New York 10012}
\setcounter{page}{1}

\begin{abstract}
A quantum coordinate-entropy formulated in quantum phase space has been recently proposed together with an entropy law that asserts that such entropy can not decrease over time. The coordinate-entropy is dimensionless,  a relativistic scalar, and it is invariant under  coordinate and CPT transformations. We study here the time evolution of this  entropy. 

We show that the entropy associated with coherent states evolving under a Dirac Hamiltonian is increasing. However, for the collisions of two particles, where each is evolving as a coherent state,  as they come closer to each other their spatial entanglement causes the total system's entropy to oscillate. We augment time reversal with time translation and show that CPT with time translation can transform particles with decreasing entropy  for a finite time interval  into anti-particles with increasing entropy  for the same finite time interval.  

We then analyze the impact of the entropy law for the evolution scenarios described above and explore the possibility that  entropy oscillations  trigger  the annihilation and the creation of particles.

\end{abstract}

\maketitle

\pagebreak

\tableofcontents

\pagebreak

\section{Introduction}

A time arrow emerges in physics only when a probabilistic behavior of ensembles of particles is considered in classical physics. In contrast,  quantum physics is presented as time reversible even though a probabilistic behavior is intrinsic even to a single-particle.   In ~\cite{GeigerKedem2021b} we proposed a definition of quantum entropy to measure the randomness of a quantum state, while accounting for all its degrees of freedom (DOFs). That entropy is the sum of two components: the coordinate-entropy and the spin-entropy, each defined in its own quantum phase space. 
The quantum phase space is the space of the projection of a state onto a pair  of conjugate basis. The spin entropy was elaborated in  ~\cite{geiger2021spin}.  We analyzed there the possible entropy evolution and conjectured that a law analogous to the classical second law of thermodynamics holds, applicable to all  particle physics.

Here we study in  more technical depth   the time evolution of the coordinate-entropy further address the issues studied in~\cite{GeigerKedem2021b}.  Following the results described above, we review a conjectured entropy law that the entropy of a quantum system is an increasing function of time. The results are applicable to both the Quantum Mechanics (QM) and the Quantum Field Theory (QFT) settings, but we  generally present them in only a more convenient setting.

\subsection{Our previous work on quantum coordinate-entropy}
\label{sec:quantum-entropy-def}
Given a state $\ket{\psi}_t$ and its density operator $\rho_t=\ket{\psi}_t\bra{\psi}_t$, we consider the quantum coordinate phase space to be the space of simultaenous projections of all possible state densities onto the basis $\ket{\mathbf{r}},\ket{\mathbf{p}}$, i.e.,  the state   $\ket{\psi}_t$  is described in quantum phase space by the pair $(\bra{\mathbf{r}}\rho_t\ket{\mathbf{r}},\bra{\mathbf{p}}\rho_t\ket{\mathbf{p}})$. The coordinate-entropy in quantum phase space was defined in~\cite{GeigerKedem2021b}  as
\begin{align}
  \entropyS
 &=-\kernBeforeIntegral\int  \, \rho_{\mathrm{r}} (\mathbf{r},t)  \rho_k ( \mathbf{k},t)\,
 \, \ln \left ( \rho_{\mathrm{r}} (\mathbf{r},t)  \rho_k (\mathbf{k},t) \,  \right) \,
 \diff^3\mathbf{r}\, \diff^3\mathbf{k}\,,
 \label{eq:Relative-entropy-scalar}
\end{align}
where   $\entropyS_{\mathrm{r}}=  -\kernBeforeIntegral\int  \rho_{\mathrm{r}} (\mathbf{r},t) \ln \rho_{\mathrm{r}} (\mathbf{r},t) \, \diff^3\mathbf{r}$, and analogously for $\entropyS_{\mathrm{k}}$, $\rho_{\mathrm{r}}(\mathbf{r},t)=\bra{\mathbf{r}}\rho_t\ket{\mathbf{r}} =|\psi(\mathbf{r},t)|^2$ and $\rho_k(\mathbf{k},t)=\bra{\mathbf{k}}\rho_t\ket{\mathbf{k}}=|\tilde \phi(\mathbf{k},t)|^2$,  with $\psi(\mathbf{r},t)$ and
$\tilde \phi(\mathbf{k},t)$ representing in QM the wave function and in QFT the coefficients of the Fock states. The momentum  is described by the change of variables  $\mathbf{p}= \hbar \mathbf{k}$, so that the entropy is dimensionless and invariant under changes of the units of measurements.

A natural extension of this  entropy to an $N$-particle QM system is
 \begin{align}
     S &= - \kernBeforeIntegral\int \diff^3 \mathbf {r}_1 \diff^3\mathbf {k}_1 \hdots \diff^3 \mathbf {r}_N \diff^3\mathbf {k}_N \,   \rho_{\mathrm{r}}(\mathbf {r}_1,\hdots, \mathbf {r}_N ,t)  \rho_{\mathrm{k}}(\mathbf {k}_1,\hdots, \mathbf {k}_N ,t)
     \\
     & \qquad \quad \times     \ln \left (  \rho_{\mathrm{r}}(\mathbf {r}_1,\hdots, \mathbf {r}_N ,t) \rho_{\mathrm{k}}(\mathbf {k}_1,\hdots, \mathbf {k}_N ,t) \right)
     \\
     &= -\kernBeforeIntegral\int \diff^3 \mathbf {r}_1 \hdots  \kernBeforeIntegral\int \diff^3 \mathbf {r}_N \,   \rho_{\mathrm{r}}(\mathbf {r}_1,\hdots, \mathbf {r}_N ,t)
     \ln \rho_{\mathrm{r}}(\mathbf {r}_1,\hdots, \mathbf {r}_N ,t)
     \\
     & \quad -  \kernBeforeIntegral\int \diff^3 \mathbf {k}_1 \hdots  \kernBeforeIntegral\int \diff^3 \mathbf {k}_N\,   \rho_{\mathrm{k}}(\mathbf {p}_1,\hdots, \mathbf {k}_N ,t)
     \ln \rho_{\mathrm{k}}(\mathbf {k}_1,\hdots, \mathbf {k}_N ,t) \, ,
     \label{eq:entropy-many-particles}
 \end{align}
 where $\rho_{\mathrm{r}}(\mathbf {r}_1,\hdots, \mathbf {r}_N ,t)=|\psi(\mathbf {r}_1,\hdots, \mathbf {r}_N ,t)|^2$ and $\rho_{\mathrm{k}}(\mathbf {k}_1,\hdots, \mathbf {k}_N ,t)=|\phi(\mathbf {k}_1,\hdots, \mathbf {k}_N ,t)|^2 $ are defined in QM via the projection of the state $\ket{\psi_t}^N$ of $N$ particles (the product of $N$ Hilbert spaces) onto the position $\bra{\mathbf {r}_1}\hdots\bra{\mathbf {r}_N} $ and the momentum $\bra{\mathbf {k}_1}\hdots\bra{\mathbf {k}_N} $ coordinate systems.

\section{Time Evolution: \QCurves and Entropy-Partition}
\label{sec:QCurve}

We introduce the concept of  a \QCurve to specify
a curve (or path) in a Hilbert space parametrized by time.  In QM a \QCurve is represented by  a triple $\big(\ket{\psi_{0}}, U(t), \deltau t \big)$ where $\ket{\psi_{0}}$ is the initial state, $U(t)=\eu^{-\iu \frac{H}{\hbar} t}$ is the evolution operator, and $[0, \deltau t]$ is the time interval of the evolution. Of course, one may also represent the initial state by a triple $\big(\rho_0, U(t), \deltau t \big)$, where $\rho_0=\ket{\psi_{0}}\bra{\psi_{0}}$ is the density matrix. Alternatively, we can represent the initial state in the quantum coordinate phase space by   $(\bra{\mathbf{r}}\rho_0\ket{\mathbf{r}}, \bra{\mathbf{k}}\rho_0\ket{\mathbf{k}})$ or $(\bra{\mathbf{r}}\ket{\psi_{0}}, \bra{\mathbf{k}}\ket{\psi_{0}})$, and in QFT by  $\left ( \bra{\state}\rho_r(\mathbf{r},0)\ket{\state}, \bra{\state}\rho_k(\mathbf{k},0)\ket{\state}\right)$, with $\rho_r(\mathbf{r},0)=\Psi^{\dagger}(\mathbf{r},0)\Psi(\mathbf{r},0)$ and $\rho_k(\mathbf{k},0)=\Phi^{\dagger}(\mathbf{k},0)\Phi(\mathbf{k},0)$. We will use any of these representations to describe a \QCurve as more convenient for manipulations for the problem at hand.

\begin{definition}[Partition of $\mathscr{E}$]
  \label{def:partition}
Let $\setAll$ to be the set of all the \QCurves.  We define a partition of  $\setAll$ based on the entropy evolution into four blocks:
 \begin{adjustwidth}{-1em}{}
\begin{description}
 \begin{description}
 \item[$\setConstant$]
Set of the \QCurves for which the entropy is a constant.
  \item[$\setIncreasing$]
Set of  the \QCurves for which the entropy is increasing, but it is not a constant.
 \item[$\setDecreasing$]
Set of the \QCurves for which the entropy is decreasing, but it is not a constant.
\item[$\setOscillating$]
Set of the oscillating \QCurves, with the entropy strictly increasing in some subinterval of $[0, \deltau t]$ and strictly decreasing in another subinterval of $[0,\deltau t]$.
 \end{description}
\end{description}
\end{adjustwidth}
 \end{definition}

Consider stationary states $\ket{\psi_t}=\ket{\psi_E} \ee^{-\iu \omega t} $ with $\omega={E}/{\planckbar}$, where  $E$ is an energy eigenvalue of the Hamiltonian, and $\ket{\psi_E}$ is the time-independent eigenstate of the Hamiltonian  associated with  $E$.

\begin{theorem}
\label{prop:stationaryStatesAreInC}
{All  stationary states  are in  $\setConstant$}.
\end{theorem}
\begin{proof}
Follows from the time invariance of the probabilities $\rho_t=\ket{\psi_t}\bra{\psi_t}=\ket{\psi_E}\bra{\psi_E}$.
\end{proof}

\section{The Coordinate-Entropy of Coherent States Increases With Time}
\label{sec:coherent-states}

Dirac's  free-particle Hamiltonian in QM \cite{dirac1930theory}   is
\begin{align}
  H = -\iu \planckbar\gamma^0 \vec{\gamma} \cdot \nabla+ m c \gamma^0 \, .
  \label{eq:Dirac-Hamiltonian}
\end{align}
It can be diagonalized in the
spatial Fourier domain $\ket{\mathbf {k}}$ basis  to obtain
\begin{align}
\omega(\mathbf{k})& =\pm c \sqrt{ \matrixsym{k}^2+\frac{m^2}{\planckbar^2} c^2}\,,
\label{eq:Fourier-Hamiltonians}
\end{align}
where $\omega(\mathbf{k})$ is the frequency component of the Hamiltonian.  We focus on the positive energy solutions and so the group velocity becomes
 \begin{align}
\mathbf{v_g}(\mathbf{k})&=\nabla_{\mathrm{k}} \omega(\mathbf{k})=\frac{\planckbar}{m}\frac{\mathbf {k}}{\sqrt{1+(\frac{\planckbar \matrixsym{k}}{m c})^2}}\,.
\label{eq:Fourier-group-velocity}
 \end{align}
In  \eqref{eq:w-dispersion} we will use the Taylor expansion of~\eqref{eq:Fourier-Hamiltonians} up to the second order, thus  requiring  the Hessian $\hessian(\mathbf{k})$, with the entries
\begin{align}
\hessian_{ij}(\mathbf{k})&= \frac{\partial^2 \omega(\mathbf{k}) }{\partial \matrixsym{k}_i \partial \matrixsym{k}_j} = \frac{\planckbar}{m} \left (1+\left (\frac{\planckbar \matrixsym{k}}{m c}\right)^2\right)^{-\frac{3}{2}}\left [ \deltau_{i,j}\left (1+\left (\frac{\planckbar \matrixsym{k}}{m c}\right)^2 \right) -\left (\frac{\planckbar \matrixsym{k}_i}{m c}\right) \left (\frac{\planckbar \matrixsym{k}_j}{m c}\right)\right ]\quad
\label{eq:Fourier-group-Hessian}
\end{align}
for the positive energy solution.  The three  (positive) eigenvalues of $\hessian(\mathbf{k})$  are
\begin{align}
\label{eigenvalues-Hamiltonian}
\lambda_1&=\frac{\planckbar}{m}\left (1+\left (\frac{\planckbar \matrixsym{k}}{m c}\right)^2\right)^{-\frac{3}{2}}=\, \planckbar \frac{m^2}{\left (m^{2}+\mu^2(\matrixsym{k})\right)^{\frac{3}{2}}}\,,
\\
\lambda_{2,3}& = \frac{\planckbar}{m} \left (1+\left (\frac{\planckbar \matrixsym{k}}{m c}\right)^2\right)^{-\frac{1}{2}}=\,
 \planckbar \frac{1}{(m^2+\mu^2(\matrixsym{k}))^{\frac{1}{2}} }\, ,
\end{align}
where $\mu(\matrixsym{k})={\planckbar \matrixsym{k}}/{c}$ is  the kinetic energy in mass units. The Hessian is positive definite for positive energy, and so it gives  a measure of the dispersion of the wave.

We now consider initial solutions that are localized in space, $\psi_{\mathrm{k}_0}(\mathbf {r}-\mathbf {r}_0)=\psi_0(\mathbf {r}-\mathbf {r}_0) \, \eu^{\iu \mathbf {k}_0\cdot \mathbf {r}}$, where $\mathbf{r}_0$ is the mean value of $\mathbf {r}$. Assume that the variance, $\int \diff^3\mathbf {r}\, (\mathbf {r}-\mathbf {r}_0)^2 \rho_{\mathrm{r}}(\mathbf {r}) $, is finite, where  $\rho_{\mathrm{r}}(\mathbf {r})=|\psi_0(\mathbf {r})|^2$.  In a Cartesian representation, we can write the initial state in the spatial frequency domain as $ \phi_{\mathrm{r}_0}(\mathbf {k}-\mathbf {k}_0)= \phi_{0}(\mathbf {k}-\mathbf {k}_0) \, \eu^{-\iu (\mathbf {k}-\mathbf {k}_0)\cdot \mathbf {r}_0}$, where $\phi_{0}(\mathbf {k})$ is the Fourier transform of $\psi_0(\mathbf {r})$, and so
the variance of
$\rho_{\mathrm{k}}(\mathbf {k})=|\phi_{\mathrm{r}_0}(\mathbf {k}-\mathbf {k}_0)|^2$
is also finite, with the mean in the  spatial frequency center $\mathbf {k}_0$.

The time evolution of $\psi_{\mathrm{k}_0}(\mathbf {r}-\mathbf {r}_0)$  according a Hamiltonian  with a dispersion relation $\omega(\mathbf {k})$, and written  via the inverse Fourier transform, is
\begin{align}
 \psi_{\mathrm{k}_0}(\mathbf {r}-\mathbf {r}_0,t) &={\frac {1}{({\sqrt {2\piu }})^{3}}}\int \Phi_{\mathrm{r}_0} (\mathbf {k}-\mathbf {k}_0) \, \eu^{-\iu \omega(\mathbf {k}) t} \eu^{\iu \mathbf {k} \cdot \mathbf {r} }\diff^{3}\mathbf {k}\,.
 \label{eq:time-evolution-psi-general}
\end{align}
As $\phi_{\mathrm{r}_0}(\mathbf {k}-\mathbf {k}_0)$ fades away exponentially from $\mathbf {k}=\mathbf {k}_0$, we expand~\eqref{eq:Fourier-Hamiltonians} in a Taylor series and approximate it as
\begin{align}
 \omega(\mathbf {k})   &\approx
 \mathbf{\mathbf{ v_p}}(\mathbf {k}_0) \cdot \mathbf {k}_0+\mathbf{v_g}(\mathbf {k}_0) \cdot (\mathbf {k}-\mathbf{k}_0)
+
 \frac{1}{2}(\mathbf {k}-\mathbf {k}_0)^{\tran} \hessian(\mathbf {k}_0)\,
 (\mathbf {k}-\mathbf {k}_0)\, , \,
  \label{eq:w-dispersion}
\end{align}
where $\mathbf{\mathbf{ v_p}}(\mathbf {k}_0)$, $\mathbf{ \mathbf{v_g}}(\mathbf {k}_0)$, and $ \hessian(\mathbf {k}_0) $ are the phase velocity $\frac{\omega(\mathbf {k}_0)}{|\mathbf {k}_0|} \hat{\mathbf {k}}_0$, the group velocity \eqref{eq:Fourier-group-velocity}, and the Hessian \eqref{eq:Fourier-group-Hessian} of the dispersion relation $\omega(\mathbf {k})$, respectively. Then after inserting~\eqref{eq:w-dispersion}    into~\eqref{eq:time-evolution-psi-general}, we obtain the quantum dispersion transform
\begin{flalign}
 \phi_{\mathbf {r}_{\mathbf {k}_0}^t} (\mathbf {k}-\mathbf {k}_0, t) & \approx   \frac{1}{Z_k} \, \eu^{-\iu t \mathbf{\mathbf{ v_p}}(\mathbf {k}_0) \cdot \mathbf {k}_0} \, \Phi_{\mathbf {r}_{\mathbf {k}_0}^t} (\mathbf {k}-\mathbf {k}_0)  \,  \normalx{\mathbf {k}}{\mathbf{k}_0}{-\iu t^{-1} \hessian^{-1}(\mathbf {k}_0) } ,
 \\
\psi_{\mathrm{k}_0}(\mathbf {r}-\mathbf {r}_{\mathbf {k}_0}^t,t)  & \approx
 \frac{1}{Z_{\mathrm{r}}}\eu^{-\iu t \mathbf{\mathbf{ v_p}}(\mathbf {k}_0) \cdot \mathbf {k}_0} \,  \psi_{\mathrm{k}_0}(\mathbf {r}- \mathbf {r}_{\mathbf {k}_0}^t) \ast \normalx{\mathbf {r}}{\mathbf {r}_{\mathbf {k}_0}^t}{\iu t \hessian(\mathbf {k}_0) }, \, \,
 \label{eq:time-evolution-psi-dispersion}
\end{flalign}
where $\mathbf {r}_{\mathbf {k}_0}^t=\mathbf {r}_0+\mathbf{v_g}(\mathbf {k}_0) t$, $\Phi_{\mathbf {r}_{\mathbf {k}_0}^t} (\mathbf {k}-\mathbf {k}_0) = \phi_{0}(\mathbf {k}-\mathbf {k}_0) \, \eu^{-\iu (\mathbf {k}-\mathbf {k}_0)\cdot \mathbf {r}_{\mathbf {k}_0}^t}$, with Fourier transform $\psi_{\mathrm{k}_0}(\mathbf {r}- \mathbf {r}_{\mathbf {k}_0}^t)  $; $\ast$ denotes a convolution, $Z_{r}$ and $Z_{k}$ normalize the amplitudes, and $\mathscr{N}$ is a normal distribution. Consequently, $\psi_{\mathrm{k}_0}(\mathbf {r}-\mathbf {r}_{\mathbf {k}_0}^t,t)$ is the spatial Fourier transform of $\Phi_{\mathbf {r}_{\mathbf {k}_0}^t} (\mathbf {k}-\mathbf {k}_0, t) $.

The probability densities associated with the probability amplitudes in \eqref{eq:time-evolution-psi-dispersion} are
\begin{align}
 \rho_{\mathrm{r}}(\mathbf {r}-\mathbf {r}_{\mathbf {k}_0}^t,t) & = \frac{1}{Z_{\mathrm{r}}^2}
 | \psi_{\mathrm{k}_0}(\mathbf {r}- \mathbf {r}_{\mathbf {k}_0}^t) \ast \normalx{\mathbf {r}}{\mathbf {r}_{\mathbf {k}_0}^t}{\iu \, t\, \hessian(\mathbf {k}_0) }|^2\,,
 \\
\rho_{\mathrm{k}}(\mathbf {k}-\mathbf {k}_0,t) & = \frac{1}{Z_k^2}|\Phi_{\mathbf {r}_{\mathbf {k}_0}^t} (\mathbf {k}-\mathbf {k}_0)|^2   \, .
 \label{eq:time-evolution-rho-dispersion}
\end{align}

\begin{lemma}[Dispersion Transform and Reference Frames]
\label{lemma:simplified-densities}
The entropy associated with ~\eqref{eq:time-evolution-rho-dispersion}
is equal to the entropy associated with the simplified probability densities
\begin{align}
 \rho^\entropyS_{\mathrm{r}}(\mathbf {r},t)&=  \frac{1}{Z^2}| \psi_{0}(\mathbf {r}) \ast \normalx{\mathbf {r}}{0}{\iu \, t\, \hessian(\mathbf {k}_0) }|^2\,,
 \\
\rho^\entropyS_{\mathrm{k}}(\mathbf {k},t) & = \frac{1}{Z_k^2}|\Phi_{0} (\mathbf {k})|^2 =\rho^\entropyS_{\mathrm{k}}(\mathbf {k},t=0)\, .
 \label{eq:rho-dispersion-transform-simplified}
\end{align}
\end{lemma}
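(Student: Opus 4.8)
The plan is to prove that the two density sets in~\eqref{eq:time-evolution-rho-dispersion} and~\eqref{eq:rho-dispersion-transform-simplified} carry the same entropy by peeling off, one at a time, the ingredients that distinguish them and showing each is entropy-neutral. Writing $\entropyS=\entropyS_{\mathrm r}+\entropyS_{\mathrm k}$, the differences to account for are: (i) the rigid position drift $\mathbf r_{\mathbf k_0}^t=\mathbf r_0+\mathbf{v_g}(\mathbf k_0)\,t$; (ii) the momentum centering at $\mathbf k_0$; (iii) the modulation $\eu^{\iu\mathbf k_0\cdot\mathbf r}$ hidden in $\psi_{\mathrm k_0}$ as opposed to $\psi_0$; and (iv) the evaluation of the Hessian at $\mathbf k_0$ rather than at $0$. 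I expect items (i)--(iii) to be dispatched with the translation- and phase-invariance of the differential entropy, leaving (iv) as the substantive step.

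First I would settle $\entropyS_{\mathrm k}$. Since $\Phi_{\mathbf r_{\mathbf k_0}^t}(\mathbf k-\mathbf k_0)=\phi_{0}(\mathbf k-\mathbf k_0)\,\eu^{-\iu(\mathbf k-\mathbf k_0)\cdot\mathbf r_{\mathbf k_0}^t}$ differs from $\phi_{0}(\mathbf k-\mathbf k_0)$ only by a unimodular factor, and the dispersive evolution multiplies the momentum amplitude by the unit-modulus phase $\eu^{-\iu\omega(\mathbf k)t}$, the momentum density is stationary and equals $|\phi_{0}(\mathbf k-\mathbf k_0)|^2/Z_k^2$. This is a rigid translate by $\mathbf k_0$ of the simplified $\rho^{\entropyS}_{\mathrm k}$, so $\entropyS_{\mathrm k}$ is identical in the two descriptions. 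Consequently the whole burden of the lemma falls on $\entropyS_{\mathrm r}$.

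For $\entropyS_{\mathrm r}$ I would first drop the drift $\mathbf r_{\mathbf k_0}^t$ by translation invariance. To remove the modulation I would push $\eu^{\iu\mathbf k_0\cdot\mathbf r}$ through the convolution in~\eqref{eq:time-evolution-psi-dispersion}: writing that evolution in the Fourier domain as the multiplication of $\phi_{0}(\mathbf k-\mathbf k_0)$ by the pure phase $\eu^{-\tfrac{\iu t}{2}\mathbf k^{\tran}\hessian(\mathbf k_0)\mathbf k}$ and completing the square in $\mathbf k$, the offset $\mathbf k_0$ contributes only a global phase together with a further rigid shift of $\mathbf r$ by $t\,\hessian(\mathbf k_0)\mathbf k_0$. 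Neither affects $|\psi|^2$, so after these reductions $\rho_{\mathrm r}$ collapses to $|\psi_{0}\ast\mathscr N(\mathbf r\mid 0,\iu t\,\hessian(\mathbf k_0))|^2/Z^2$, which is identical to the simplified density except that the Hessian is still evaluated at $\mathbf k_0$.

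The remaining and hardest step is to trade $\hessian(\mathbf k_0)$ for $\hessian(0)=\tfrac{\planckbar}{m}\Identitymatrix$. I would realize this as a linear change of spatial coordinates $\mathbf r\mapsto A\mathbf r$ with $A$ chosen so that $A\,\hessian(\mathbf k_0)\,A^{\tran}=\hessian(0)$ --- possible since both matrices are symmetric positive definite by~\eqref{eigenvalues-Hamiltonian} --- which is precisely the boost to the frame in which the dispersion is isotropic; in the Fourier domain it intertwines the $\hessian(\mathbf k_0)$-propagator with the $\hessian(0)$-propagator and so converts the anisotropic blur into the isotropic one. The main obstacle is that $A$ is \emph{not} volume preserving, since $\det\hessian(\mathbf k_0)\neq\det\hessian(0)$ away from $\mathbf k_0=0$, so it shifts $\entropyS_{\mathrm r}$ by $\ln\lvert\det A\rvert$; the equality can therefore be rescued only at the level of the \emph{total} entropy, where the position Jacobian is cancelled by the conjugate-momentum Jacobian. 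I would accordingly invoke the invariance of $\entropyS=\entropyS_{\mathrm r}+\entropyS_{\mathrm k}$ under change of coordinates and Lorentz transformations established in~\cite{GeigerKedem2021a}, which guarantees that the reference-frame map sending the lab-frame densities~\eqref{eq:time-evolution-rho-dispersion} to the rest-frame densities~\eqref{eq:rho-dispersion-transform-simplified} preserves $\entropyS$. The delicate point to verify is the consistency of this cancellation with the frame independence of $\entropyS_{\mathrm k}$ obtained above, namely that the isotropizing rescaling is absorbed into the rest-frame initial state rather than charged against $\entropyS_{\mathrm k}$.
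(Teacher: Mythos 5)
Your steps (i)--(iii) are exactly the paper's proof: the paper translates the position by $\mathbf r_{\mathbf k_0}^t=\mathbf r_0+\mathbf{v_g}(\mathbf k_0)t$ and the momentum by $\mathbf k_0$, and invokes the translation invariance of $\entropyS_{\mathrm r}$ and $\entropyS_{\mathrm k}$ from~\cite{GeigerKedem2021a}; nothing more. Where you depart is your step (iv), and that step contains a genuine gap. The linear map $A$ with $A\,\hessian(0)\,A^{\tran}=\hessian(\mathbf k_0)$ does intertwine the two propagators, but it unavoidably also acts on the initial profiles: the coordinate-change invariance of the total entropy relates the translated densities to the pair $\bigl(|\psi_0(A\mathbf r)\ast\normalx{\mathbf r}{0}{\iu t\hessian(0)}|^2,\ |\Phi_0(A^{-\tran}\mathbf k)|^2\bigr)$, \emph{not} to the lemma's pair $\bigl(|\psi_0(\mathbf r)\ast\normalx{\mathbf r}{0}{\iu t\hessian(0)}|^2,\ |\Phi_0(\mathbf k)|^2\bigr)$. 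Discarding the $A$'s from the profiles while keeping the isotropized kernel is not entropy-neutral: for a Gaussian $\psi_0$ with covariance $\matrixsym{\Sigmabold}$ the two position entropies go as $\tfrac12\ln\det\bigl(A^{-1}\matrixsym{\Sigmabold}A^{-\tran}+t^2\hessian(0)A^{\tran}\matrixsym{\Sigmabold}^{-1}A\hessian(0)\bigr)$ versus $\tfrac12\ln\det\bigl(\matrixsym{\Sigmabold}+t^2\hessian(0)\matrixsym{\Sigmabold}^{-1}\hessian(0)\bigr)$, which differ by a $t$-dependent amount, not by the constant $\ln\lvert\det A\rvert$ that the momentum side could cancel. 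So the ``delicate point'' you flag at the end is not merely delicate --- it is where the argument breaks.

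The resolution is that step (iv) should not be attempted at all. In the paper's reading, the symbol $\hessian(0)$ in \eqref{eq:rho-dispersion-transform-simplified} denotes the Hessian at the origin of the \emph{translated} momentum frame, i.e.\ numerically the same matrix $\hessian(\mathbf k_0)$ carried along unchanged; with that convention the simplified densities are literally the recentered originals and the two translations finish the proof. You are right that reading $\hessian(0)$ literally as $\tfrac{\planckbar}{m}\Identitymatrix$ via \eqref{eq:Fourier-group-Hessian} makes the claim problematic (and your Gaussian computation above shows it would in fact be false for $\mathbf k_0\neq 0$), so your instinct that something needed justifying was sound --- but the correct response is to fix the interpretation of the notation, not to insert a non-volume-preserving boost that cannot be reconciled with keeping $\psi_0$ and $\Phi_0$ fixed in the target densities.
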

\begin{proof}

Consider \eqref{eq:time-evolution-rho-dispersion}. If the frame of reference is translating the position by $\mathbf {r}_{\mathbf {k}_0}^t=\mathbf {r}_0+\mathbf{v_g}(\mathbf {k}_0)  t$ and the momentum by $\hbar \mathbf {k}_0$, we get the  simplified density functions \eqref{eq:rho-dispersion-transform-simplified}.

Theorem 2 in \cite{GeigerKedem2021b}
shows that  the  entropy in position and momentum is  invariant under translations of the position $\mathbf{r}$ and  the spatial frequency $\mathbf{k}$, and that completes the proof.
\end{proof}

The time invariance of the density $\rho^\entropyS_{\mathrm{k}}(\mathbf {k},t)$, and therefore of $\entropyS_{\mathrm{k}}$, reflects  the conservation law of momentum for free particles.

Coherent states, represented by state $\ket{\alpha}$,
are eigenstates of the annihilator operator. The 1D quantum phase space of observable variables $(x,p)$ can be constructed by the unitary  operator $U(x_0,p_0)=\eu^{\frac{\iu}{\hbar} (x_0 X -p_0 P) }$ applied to zero-state  $\ket{x=0,p=0}$, i.e., they can be constructed as  $\ket{\alpha}=\ket{x_0,p_0}=\eu^{\frac{\iu}{\hbar} (x_0 X -p_0 P)}\ket{0,0}$, where $\alpha=x_0+\iu p_0$. Projecting the state to  position space yields $\psi_{\alpha}(x)=\bra{x}\ket{\alpha}=\frac{\eu^{-\frac{p_0^2}{2}}}{\piu^{\frac{1}{4}}}\eu^{-\frac{1}{2}\left(x-\sqrt{2}\alpha\right)^2}$, where $\alpha=\frac{1}{\sqrt{2}}(x_0+\iu p_0)$. Squeeze states extend coherent states to all eigenstate solutions of the annihilator operator by allowing different variances to the Gaussian solution,  and together their representation in 3D position and momentum space are
\begin{align}
 \psi_{\mathrm{k}_0}(\mathbf {r}-\mathbf {r}_0)& = \bra{\mathbf {r}}\ket{\alpha}=\frac{1}{2^3 \piu^{\frac{3}{2}}(\det \matrixsym{\Sigmabold})^{\frac{1}{2}}}\, \normalx{\mathbf {r}}{\mathbf {r}_0}{\matrixsym{\Sigmabold} }\, \eu^{\iu \mathbf {k}_0\cdot \mathbf {r}}\,,
 \\
 \Phi_{\mathrm{r}_0} (\mathbf {k}-\mathbf {k}_0) &=\bra{\mathbf {k}}\ket{\alpha}=\frac{1}{2^3 \piu^{\frac{3}{2}}(\det \matrixsym{\Sigmabold}^{-1})^{\frac{1}{2}}}\, \normalx{\mathbf {k}}{\mathbf {k}_0}{\matrixsym{\Sigmabold}^{-1} }\, \eu^{\iu (\mathbf {k}-\mathbf {k}_0)\cdot \mathbf {r}_0}\,,
 \label{eq:coherent-state-3D}
\end{align}
where   $ \matrixsym{\Sigmabold} $ is the spatial covariance matrix.

\begin{theorem}
\label{thm:coherent-entropy}
A \QCurve with  an initial coherent state \eqref{eq:coherent-state-3D} and  evolving according to~ \eqref{eq:Dirac-Hamiltonian} is in \setIncreasing.
\end{theorem}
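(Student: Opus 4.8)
The plan is to isolate the position-space contribution to the entropy and reduce the whole statement to the monotonicity of a single determinant. By Lemma~\ref{lemma:simplified-densities} the entropy of the evolving coherent state equals the entropy of the simplified densities \eqref{eq:rho-dispersion-transform-simplified}. There $\rho^\entropyS_{\mathrm{k}}(\mathbf{k},t)=\rho^\entropyS_{\mathrm{k}}(\mathbf{k},0)$ is time independent, so $\entropyS_{\mathrm{k}}$ is a constant and only $\entropyS_{\mathrm{r}}(t)$ can vary. It therefore suffices to show that $\entropyS_{\mathrm{r}}(t)$ is strictly increasing on $(0,T]$: then $\entropyS(t)=\entropyS_{\mathrm{r}}(t)+\entropyS_{\mathrm{k}}$ is increasing and not constant, so the \QCurve lies in \setIncreasing.

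First I would feed the coherent-state amplitude \eqref{eq:coherent-state-3D} into $\rho^\entropyS_{\mathrm{r}}(\mathbf{r},t)\propto\big|\psi_{0}(\mathbf{r})\ast\normalx{\mathbf{r}}{0}{\iu t\hessian}\big|^2$. Because $\psi_0$ is a real Gaussian amplitude of covariance $\Sigmabold$ and the dispersion kernel is a complex Gaussian, their convolution is again a complex Gaussian, and covariances add under convolution, giving the covariance $C(t)=\Sigmabold+\iu t\hessian$. Taking the modulus squared collapses this to a genuine (real) Gaussian probability density whose precision matrix is $\Sigma_{\mathrm{r}}^{-1}(t)=2\,\mathrm{Re}\,C^{-1}(t)$. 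Using the closed form for the differential entropy of a Gaussian, $\entropyS_{\mathrm{r}}(t)=\tfrac{3}{2}\ln(2\piu\eu)+\tfrac12\ln\det\Sigma_{\mathrm{r}}(t)$, so the entire problem reduces to proving that $\det\Sigma_{\mathrm{r}}(t)$ strictly increases.

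The one real obstacle is that $\Sigmabold$ and $\hessian$ need not commute when $\mathbf{k}_0\neq\mathbf{0}$, so the determinant cannot simply be read off eigenvalue-by-eigenvalue. I would remove this by a congruence transformation: write $C(t)={\Sigmabold}^{1/2}\big(\Identitymatrix+\iu t M\big){\Sigmabold}^{1/2}$ with $M={\Sigmabold}^{-1/2}\hessian{\Sigmabold}^{-1/2}$ real symmetric, and diagonalize $M$ with eigenvalues $\mu_i>0$. A short computation then gives $\det\big(\mathrm{Re}\,C^{-1}(t)\big)=(\det\Sigmabold)^{-1}\prod_{i}(1+t^2\mu_i^2)^{-1}$, hence $\det\Sigma_{\mathrm{r}}(t)=\tfrac18\,\det\Sigmabold\,\prod_{i}(1+t^2\mu_i^2)$. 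Since $\hessian$ is positive definite (its eigenvalues \eqref{eigenvalues-Hamiltonian} are strictly positive), every $\mu_i>0$, so each factor $1+t^2\mu_i^2$ is strictly increasing for $t>0$; thus $\det\Sigma_{\mathrm{r}}(t)$, and with it $\entropyS_{\mathrm{r}}(t)$, is strictly increasing on $(0,T]$.

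Assembling the pieces, $\entropyS(t)$ is strictly increasing and non-constant on $[0,T]$, so the \QCurve is in \setIncreasing. The substantive step is the determinant identity of the third paragraph, where the congruence/simultaneous-diagonalization trick is exactly what converts the positive-definiteness of $\hessian$ into strict monotonicity; everything else is bookkeeping. I would flag one modeling caveat: the argument is carried out within the quadratic dispersion transform \eqref{eq:w-dispersion} rather than under the exact Dirac propagator, so the conclusion is understood to hold in that controlled approximation.
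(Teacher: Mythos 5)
Your proposal follows essentially the same route as the paper's proof: reduce to the simplified densities via Lemma~\ref{lemma:simplified-densities}, note that the momentum entropy is constant, collapse the convolved Gaussian into a real Gaussian position density whose covariance is $\tfrac{1}{2}(\Sigmabold+t^{2}\hessian\Sigmabold^{-1}\hessian)$ (your $\Sigma_{\mathrm{r}}(t)$ is exactly this), and reduce the claim to the growth of $\det\bigl(\Identitymatrix+t^{2}(\Sigmabold^{-1}\hessian)^{2}\bigr)$. If anything, your final step is more complete than the paper's, which argues only that this determinant is positive; your congruence/diagonalization step showing it equals $\prod_{i}(1+t^{2}\mu_{i}^{2})$ with $\mu_{i}>0$, hence is strictly increasing in $t$, supplies the justification the paper leaves implicit.
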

\begin{proof}
To describe the evolution of the initial states \eqref{eq:coherent-state-3D}, we apply~\eqref{eq:time-evolution-psi-dispersion}. Then, after
applying Lemma~\ref{lemma:simplified-densities},
\begin{align}
\rho^{\text{S}}_{\mathrm{r}}(\mathbf {r},t)& =\frac{1}{Z_2^2} \normalx{\mathbf {r}}{0}{\matrixsym{\Sigmabold} +\iu t \hessian(\mathbf {k}_0)} \normalx{\mathbf {r}}{0}{\matrixsym{\Sigmabold} -\iu t \hessian(\mathbf {k}_0)}
                                               = \normalx{\mathbf {r}}{0}{\frac{1}{2}\matrixsym{\Sigmabold}(t)}\,,
 \\
 \rho^{\text{S}}_{\mathrm{k}}(\mathbf {k},t) & = \normalx{\mathbf {k}}{0}{\matrixsym{\Sigmabold}^{-1} }\,,
\end{align}
where $\matrixsym{\Sigmabold}(t)=\matrixsym{\Sigmabold} + t^2 \hessian(\mathbf {k}_0) \matrixsym{\Sigmabold}^{-1}\hessian(\mathbf {k}_0)$.
Then
\begin{align}
 \entropyS & =\entropyS_{\mathrm{r}}+\entropyS_{\mathrm{k}}
 \\
           &= {-\kernBeforeIntegral\int \normalx{\mathbf {r}}{0}{\frac{1}{2}\matrixsym{\Sigmabold}(t)} \ln \normalx{\mathbf {r}}{0}{\frac{1}{2}\matrixsym{\Sigmabold}(t)}\diff^3\mathbf {r}}
  \\
           &
             \quad -\kernBeforeIntegral\int\normalx{\mathbf {k}}{0}{\matrixsym{\Sigmabold}^{-1}} \ln \normalx{\mathbf {k}}{0}{2\matrixsym{\Sigmabold}^{-1}}\diff^3\mathbf {k}
 \\
 &= 3 (1+  \ln \piu )  + \frac{1}{2}\ln \det\left ( \Identitymatrix+ t^2 (\matrixsym{\Sigmabold}^{-1} \hessian(\mathbf {k}_0))^2\right)\,  .
 \label{eq:entropy-coherent}
\end{align}
As $\det\left ( \Identitymatrix+ t^2 (\matrixsym{\Sigmabold}^{-1} \hessian(\mathbf {k}_0))^2\right)>0$,
the entropy increases over time.
\end{proof}
The theorem suggests that quantum physics has an inherent mechanism to increase entropy for free particles, due to the spatial dispersion property of the Hamiltonian.
Note that  at $t=0$ a coherent state \eqref{eq:coherent-state-3D} reaches the minimum possible  coordinate-entropy value, {while the spin-entropy  remains constant}.

\section{Time Reflection as a Mechanism to Convert \QCurves in \setIncreasing to \setDecreasing and Vice-Versa}
\label{subsec:time-reflection}

Consider a  time-independent Hamiltonian. We investigate the discrete symmetries C and P, and propose that Time Reversal  be augmented with Time Translation, say by $\deltau t$. We refer to  the  mapping $ t \mapsto -t + \deltau t$ as Time Reflection,  because as $t$ varies from $0$ to $\deltau t$,  $t'(t)=-t+\deltau t$ varies as a  reflection from $\deltau t$ to $0$. We  define the Time Reflection  quantum field
\begin{align}
\label{eq:time-reflection-T}
\Psi^{\mathrm{T}_{\deltau}}(\mathbf{r},-t+\deltau t)\eqdefA \mathscr{T} \Psi(\mathbf{r},t)= T \Psi^*(\mathbf{r},t)\,.
\end{align}

Note that  in contrast to the case of Time Reversal,   $\Psi^{\mathrm{T}_{\deltau}}(\mathbf{r},t')=\mathscr{T} \Psi(\mathbf{r},-t'+\deltau t)$, and the entropies associated with $\Psi(\mathbf{r},t)$ and  $\Psi^{\mathrm{T}_{\deltau}}(\mathbf{r},t)$  are generally not equal. Thus, an instantaneous Time Reflection transformation will cause entropy changes.

We next consider a composition of the  three transformation, Charge Conjugation, Parity Change, and Time Reflection.
\begin{definition}[$\Psi^{\CPT_{\deltau}}$]
  \label{def:cpt-deltau-quantum-field}
  Let the $\mathrm{CPT}_\deltau$ quantum field be
\begin{align}
\label{eq:CPT-delta-t}
    \Psi^{\mathrm{CPT}_{\deltau}}(-\mathbf{r},-t+{\deltau t})& \eqdefA \eta_{\deltau}\,  CPT\,  \overline{\Psi}^{\tran}(\mathbf{r},t)=\eta \gamma^5 \, (\Psi^{\dagger})^{\tran}(\mathbf{r}, t)\,,
\end{align}
where $\eta$ is the product of the phases of each operation,  $\eta_{\deltau}$ is the phase of time translation,
 and $\gamma^5\eqdefA\iu \gamma^0\gamma^1\gamma^2\gamma^3$.
\end{definition}

\begin{definition}[$Q_{\CPT_{\deltau}}$]
  \label{def:cpt-transformation}
  Let $Q_{\CPT_{\deltau}}$ be $\big(\psi(\mathbf{r},0), U(t), [0,\deltau t] \big)  \mapsto  \big(\psi^{\CPT_{\deltau}}(-\mathbf{r},0), U(t),  [0,\deltau t]  \big) $.
\end{definition}
Using \eqref{eq:CPT-delta-t} we see that,
\begin{align}
\psi^{\CPT_{\deltau}}(-\mathbf{r},0)\eqdefA \eta\gamma^5 \, (\Psi^{\dagger})^{\tran}(\mathbf{r}, -0+\deltau t)=\eta\, \gamma^5\, (\Psi^{\dagger})^{\tran}(\mathbf{r},\deltau t)\, .
    \label{eq:Qcpt}
\end{align}

\begin{theorem}[Time Reflection]
 \label{thm:CPT-QFT}
 Consider a $\mathrm{CPT}$ invariant quantum field theory (QFT) with energy conservation, such as Standard Model or Wightman axiomatic QFT \cite{wightman1976hilbert}. Let $e_0=\left (\psi(\mathbf{r},0), U(t), [0,\deltau t]\right)$ be a  \QCurve  solution to  such QFT.  Then, $e_{1} = Q_{\CPT_{\deltau}}(e_{0})$  is (i) a solution to such QFT, (ii) if $e_{0}$ is  in
 \setConstant,  \setDecreasing , \setOscillating, \setIncreasing then $e_{1}$ is respectively in \setConstant,  \setIncreasing , \setOscillating, \setDecreasing, making \setConstant,  \setIncreasing , \setOscillating, \setDecreasing reflections of \setConstant,  \setDecreasing , \setOscillating, \setIncreasing, respectively.
\end{theorem}

\begin{proof}
Let
$t'=-t+\deltau t$.
The \QCurve~$e_1$ describes  the evolution of  $\psi^{\CPT_{\deltau}}(-\mathbf{r},t')$
during  the period $[0,\deltau t]$.

{Since $e_0$ is a solution to a QFT that is $\text{CPT}$-invariant and time-translation invariant, $e_1$ is also a solution to the QFT, proving (i).}

The time evolution of $\psi^{\CPT_{\deltau}}(-\mathbf{r}, 0) $ from $0$ to $\deltau t$ is described by $\psi^{\CPT_{\deltau}}(-\mathbf{r}, t')$,  and  by~\eqref{eq:Qcpt}
$  \psi^{\CPT_{\deltau}}(-\mathbf{r}, t') = \eta\, \gamma^5\, (\Psi^{\dagger})^{\tran}(\mathbf{r},-t'+ \deltau t)=\eta\, \gamma^5\, \Psi^*(\mathbf{r}, \deltau t- t')$.  Thus, the evolution of $ \psi^{\CPT_{\deltau}}(-\mathbf{r}, t')$ as $t'$ evolves from $0$ to $\deltau t$,  by Theorem 3 in \cite{GeigerKedem2021b},  
has the same entropies as $\psi(\mathbf{r}, \deltau t -t')$.  Since $\psi(\mathbf{r}, \deltau t- t')$ traverses the same path  as $\psi(\mathbf{r}, t')$ but in the opposite time direction, we conclude that  $e_1$  produces the time evolution states  $\psi^{\CPT_{\deltau}}(-\mathbf{r}, t')$ in the time interval $[0,\deltau t]$ traversing the same path and with the same entropies as $\psi(\mathbf{r}, t')$, but in the opposite time directions.

Applying the above to  a \QCurve respectively in \setIncreasing, \setDecreasing,  \setConstant,  \setOscillating,  results in a \QCurve respectively in \setDecreasing, \setIncreasing,  \setConstant,  \setOscillating.
Thus, we conclude the proof of (ii).
 \end{proof}

\begin{figure}
 \centering
 \includegraphics[scale=0.4]{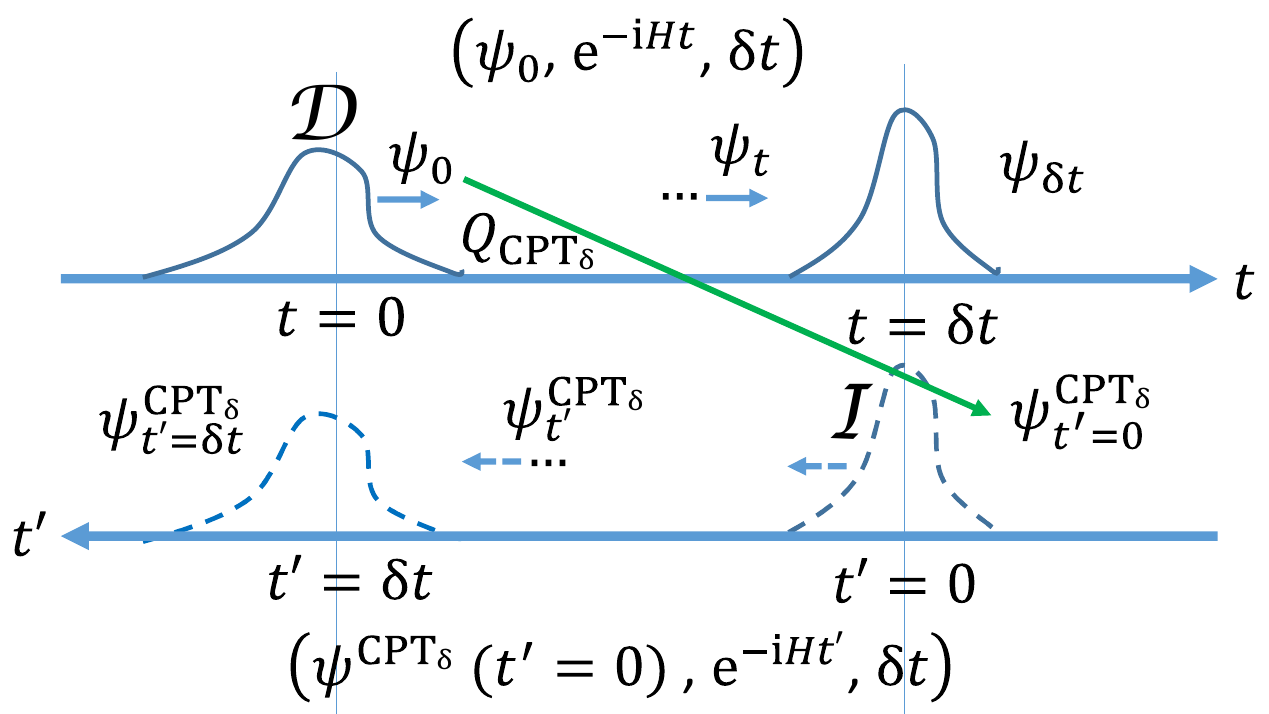}
 \caption{A visualization of the Time Reflection Theorem. (i) Axis $t$: A \QCurve $e_1=\big (\psi_0(\mathbf{r}), \eu^{-\iu H t}, \deltau t \big)$. (ii) Axis $t' =\deltau t-t$: The antiparticle \QCurve  is created as  $e_2=Q_{{\CPT}_{\deltau}}(e_1)=\big (
     \psi^{{\CPT}_{\deltau}}(-\mathbf{r},t'=0)
     , \eu^{-\iu H t'}, \deltau t\big)$. Axis $t'$ shows the evolution as going forward in time $t'$.  The evolution of $\psi^{{\CPT}_{\deltau}}(-\mathbf{r},t')=\eta \gamma^5 \, (\Psi^{\dagger})^{\tran}(\mathbf{r}, \deltau t - t')$ is mirroring the evolution of $\psi(\mathbf{r},t)$, with $t=t'$ evolving from $0$ to $\deltau t$. If  $e_1 \in \setDecreasing $, then  $e_2 \in \setIncreasing$.
}
 \label{fig:QCPT}
\end{figure}
For a visualization see Figure~\ref{fig:QCPT}.

\section{Entropy Oscillations}
\label{sec:oscillation}

Consider a Hamiltonian
  $H'  =H+H^{\mathrm{I}}$,
where   $|H^{\mathrm{I}}| \ll  |H|$,   and   the initial  eigenstate $\ket{\psi_{E_i}} $ of  $H$ associated with the eigenvalue $E_i=\hbar \omega_i$. The time evolution of $ \ket{\psi_{E_i}}$ is
\begin{equation}
     \ket{\psi_t}=\ee^{-\iu \frac{(H+H^{\mathrm{I}})}{\hbar} t} \ket{\psi_{E_i}}
                   =  \sum_{k=1}^n \alpha_k(t) \ket{\psi_{E_k}}\,,
\end{equation}
where $n$ is the number of the eigenvectors of $H$.  Fermi's golden rule \cite{fermi1950nuclear, dirac1927quantum} approximates the coefficients of transition for $k\ne i$ and short time intervals  by
\begin{align}
  \alpha_k(t)
\approx \frac{H^{\mathrm{I}}_{i,k} }{\hbar (\omega_{i}-\omega_{k})} \left(-2\sin^2 \left (  \frac{(\omega_i-\omega_k) t}{2}\right)+\iu  \sin \left (  (\omega_i-\omega_k) t\right)  \right)\,.
\end{align}

\begin{theorem}[Entropy Oscillations]
  \label{theorem:oscillations}
  Consider the \QCurve  $ \big(\ket{\psi_{E_i}}, U(t)=\ee^{-\iu \frac{(H+H^{\mathrm{I}})}{\hbar} t}, T \big)$  with  $\hbar \omega_{1}$  the ground state value of   $H$ and $T= \frac{2\piu }{|\omega_{i}-\omega_{1}|}$. Assume that $|\alpha_1(t)|^2, |\alpha_i(t)|^2 \gg |\alpha_k(t)|^2\, \text{for} \, k\ne 1,i$ and $t\in [0, T]$. Then the \QCurve   is in $\setOscillating$.
\end{theorem}
\begin{proof}
  With the theorem's assumptions, we can approximate the position and the momentum probability densities  associated with $\ket{\psi_t}$ by
\begin{align}
     \rho_{\mathrm{r}}(\mathbf{r},t)  &\approx \left |\sqrt{1-|\alpha_1(t)|^2} \bra{\mathbf{r}}\ket{\psi_{E_i}}+\alpha_1(t) \bra{\mathbf{r}}\ket{\psi_{E_1}}\right |^2\,,
     \\
     \rho_{\mathrm{k}}(\mathbf{k},t) &\approx \left |\sqrt{1-|\alpha_1(t)|^2} \bra{\mathbf{k}}\ket{\psi_{E_i}}+\alpha_1(t) \bra{\mathbf{k}}\ket{\psi_{E_1}}\right |^2\,.
   \end{align}
The time coefficients  are the same for $\rho_{\mathrm{r}}(\mathbf{r},t)$ and $\rho_{\mathrm{k}}(\mathbf{k},t)$, and they all return to the same values simultaneously  after a period of $T$, and so the entropy will return to its previous value  too. As the entropy is not a constant, it must be oscillating.
\end{proof}

Thus, when Fermi's golden rule can be applied, the coefficients of the transition probabilities of the unitary evolution of a state oscillate, and the entropy associated with the evolution of such a state will also oscillate with the same period.

\begin{theorem}[Coefficients for two states]
\label{theorem:FermiCoefficientsExactTwoState}
Consider a  particle in  an eigenstate  $\ket{\psi_{E_1}}$ of  a Hamiltonian $H$ that has only two eigenstates $\ket{\psi_{E_1}}$ and $ \ket{\psi_{E_2}}$  with eigenvalues $E_1=\hbar \omega_1$ and $E_2=\hbar \omega_2$, respectively. Let this particle interact with an external field (such as the impact of a  Gauge Field), requiring an additional Hamiltonian term $H^{\mathrm{I}}$ to describe the evolution of this system.

Let
$\omega^{\mathrm{I}}_{i,j} = \frac{1}{\hbar}\bra{\psi_{E_i}}H^{\mathrm{I}}\ket{\psi_{E_j}}$,
$\omega_{1}^{\mathrm{total}} = \omega_1+\omega_{11}^{\mathrm{I}}$,
$\omega_{2}^{\mathrm{total}} = \omega_2+\omega_{22}^{\mathrm{I}}$,
$\eta = \sqrt{\left(\omega_{1}^{\mathrm{total}} - \omega_{2}^{\mathrm{total}} \right)^2 +4 (\omega_{12}^{\mathrm{I}})^2 }$,
and
$\lambda_{\pm} = \frac{\omega_{1}^{\mathrm{total}} + \omega_{2}^{\mathrm{total}} \pm \eta}{2}$.
  The probability of the particle to be in  state $\ket{\psi_{E_2}}$ at time $t$ is
  \begin{equation}
    \label{eq:1}
\frac{4(\omega_{12}^{\mathrm{I}})^2  }{\eta^2} \, \sin^2 \frac{(\lambda_{+}-\lambda_{-}) t}{2}\,.
  \end{equation}
\end{theorem}

\begin{proof}
The Hamiltonians in the basis $\ket{\psi_{E_1}}, \ket{\psi_{E_2}}$ are
\begin{align}
    H = \hbar \begin{pmatrix} \omega_1 & 0 \\ 0 & \omega_2 \end{pmatrix}\qquad \text{and} \qquad  H^{\mathrm{I}} = \hbar \begin{pmatrix} \omega_{11}^{\mathrm{I}} & \omega_{12}^{\mathrm{I}} \\ \omega_{12}^{\mathrm{I}} & \omega_{22}^{\mathrm{I}} \end{pmatrix}\,,
    \label{eq:Hamiltonians}
\end{align}
where the real values satisfy  $\omega_{21}^{\mathrm{I}}=\omega_{12}^{\mathrm{I}}$ as $H^{\mathrm{I}}$ is Hermitian. The  eigenvalues of the symmetric matrix $H' = H+H^{\mathrm{I}}$ are  $\hbar \lambda_{\pm}$, and so we can decompose it as
\begin{align}
   H' &= \hbar \begin{pmatrix}  \omega_{1}^{\mathrm{total}}& \omega_{12}^{\mathrm{I}} \\ \omega_{12}^{\mathrm{I}} & \omega_{2}^{\mathrm{total}} \end{pmatrix}
   =\begin{pmatrix}
        \cos \theta &  -\sin \theta
        \\
         \sin \theta &  \cos \theta
    \end{pmatrix} \begin{pmatrix}
      \hbar \lambda_{+} &  0
        \\
         0 &  \hbar\lambda_{-}
    \end{pmatrix} \begin{pmatrix}
        \cos \theta &  \sin \theta
        \\
         -\sin \theta &  \cos \theta
    \end{pmatrix}\, ,
    \label{eq:H01-decompose}
\end{align}
where
\begin{equation}
  \label{eq:theta-definition}
  \theta =  \frac{1}{2} \arcsin \frac{2\omega_{12}^{\mathrm{I}}}{\eta} \,.
\end{equation}

The time evolution of $ \ket{\psi_{E_1}}$ is
$     \ket{\psi_t}=\ee^{-\iu \frac{(H+H^{\mathrm{I}})}{\hbar} t} \ket{\psi_{E_1}}
                   =  \sum_{k=1}^2 \alpha_k(t) \ket{\psi_{E_k}}$,
and projecting on
    $\bra{\psi_{E_j}}$, we get
$    \alpha_j(t)  =  \bra{\psi_{E_j}}\ee^{-\iu \frac{(H+H^{\mathrm{I}})}{\hbar} t} \ket{\psi_{E_1}}$.
From \eqref{eq:H01-decompose},
\begin{align}
   \ee^{-\iu \frac{H'}{\hbar} t} &=\begin{pmatrix}
        \cos \theta &  -\sin \theta
        \\
         \sin \theta &  \cos \theta
    \end{pmatrix} \begin{pmatrix}
     \ee^{-\iu  \lambda_+ t} &  0
        \\
         0 &  \ee^{-\iu  \lambda_- t}
    \end{pmatrix} \begin{pmatrix}
        \cos \theta &  \sin \theta
        \\
         -\sin \theta &  \cos \theta
    \end{pmatrix}
    \\
    &=\begin{pmatrix}
         \ee^{-\iu  \lambda_+ t}\cos^2 \theta  + \ee^{-\iu  \lambda_- t}\sin^2 \theta & \,  \frac{\ee^{-\iu  \lambda_+ t} - \ee^{-\iu  \lambda_- t}}{2} \sin 2 \theta
        \\
         \frac{\ee^{-\iu  \lambda_+ t} - \ee^{-\iu  \lambda_- t}}{2} \sin 2 \theta & \,   \ee^{-\iu  \lambda_+ t}\sin^2 \theta  + \ee^{-\iu  \lambda_- t}\cos^2 \theta
    \end{pmatrix}\,.
\end{align}
Thus,
\begin{align}
\label{eq:Fermi-coefficient-two-state}
     \begin{pmatrix}
    \alpha_1(t)
    \\
    \alpha_2(t)
    \end{pmatrix} &=  \ee^{-\iu \frac{H'}{\hbar} t}   \begin{pmatrix}
    1
    \\
    0
    \end{pmatrix}=\begin{pmatrix}
         \cos^2 \theta \, \ee^{-\iu  \lambda_+ t}  + \sin^2 \theta \,  \ee^{-\iu  \lambda_- t}
        \\
         \sin 2 \theta \,  \left(\frac{\ee^{-\iu  \lambda_+ t} - \ee^{-\iu  \lambda_- t}}{2} \right)
    \end{pmatrix}\, ,
\end{align}
and so
\begin{align}
    \begin{pmatrix}
    |\alpha_1(t)|^2
    \\
    |\alpha_2(t)|^2
    \end{pmatrix} &= \begin{pmatrix}
     1-\frac{1}{2} \sin^2 2\theta \left (1-  \cos (\lambda_--\lambda_+) t\right)
    \\
  \frac{1}{2}  \sin^2 2 \theta \, \left (1-  \cos (\lambda_--\lambda_+) t\right)
    \end{pmatrix}\,.
\end{align}
As $1-  \cos (\lambda_--\lambda_+) t =2\, \sin^2 \frac{(\lambda_+-\lambda_-) t}{2}$ ,  the probability of being in  state $\ket{\psi_{E_2}}$ at time $t$  is $|\alpha_2(t)|^2=\sin^2 2 \theta \, \sin^2 \frac{(\lambda_+-\lambda_-) t}{2}$. Using \eqref{eq:theta-definition}, completes the proof.
\end{proof}

If $\omega_{1} \gg \omega_{11}^{\mathrm{I}}$, $\omega_{2} \gg \omega_{22}^{\mathrm{I}}$, and $|\omega_{1}- \omega_{2}| \gg \omega_{12}^{\mathrm{I}}$, then $\lambda_{+,-} \approx \omega_{1,2}$, and the coefficient of transition becomes $|\alpha_2(t)|^2\approx \frac{4(\omega_{12}^{\mathrm{I}})^2  }{(\omega_{1}-\omega_{2})^2} \, \sin^2 \frac{(\omega_2-\omega_1) t}{2}$, which is  Fermi's golden rule \cite{fermi1950nuclear, dirac1927quantum}.

\section{A Two-Particle Collision}
\label{subsec:two-particle-system}

Consider a  two-fermions or a two-massive-bosons system
 \begin{align}
  \label{eq:two-particle-state}
   \ket{\psi_t}&=\frac{1}{\sqrt{C_t}}  \left (\ket{\psi^1_t}\ket{\psi^2_t}\mp \ket{\psi^2_t}\ket{\psi^1_t}
  \right )\,,
  \end{align}
where $C_t$ is the normalization constant that may evolve over time and the signs ``$\mp$'' represent fermions (``$-$'') and bosons (``+''). When  $\ket{\psi^1_t}$ and $ \ket{\psi^2_t}$  are orthogonal to each other,  $C_t=2$.
 Projecting on $\bra{\mathbf {r}_1}\bra{\mathbf {r}_2}$ and on $\bra{\mathbf {k}_1}\bra{\mathbf {k}_2}$,
 \begin{align}
     \psi(\mathbf {r}_1,\mathbf {r}_2 ,t)&=\frac{1}{\sqrt{C_t}}
     \left ( \psi_1(\mathbf {r}_1,t) \psi_2(\mathbf {r}_2,t)\mp \psi_1(\mathbf {r}_2,t) \psi_2(\mathbf {r}_1,t)  \right )\,,
      \\
     \psi(\mathbf {k}_1,\mathbf {k}_2 ,t)&=\frac{1}{\sqrt{C_t}}
     \left ( \phi_1(\mathbf {k}_1,t) \phi_2(\mathbf {k}_2,t)\mp \phi_1(\mathbf {k}_2,t) \phi_2(\mathbf {k}_1,t)  \right )\,.
 \end{align}

The  entropy of the two-particle system, discarding the spin-entropy which is constant throughout the collision,   is then
  \begin{align}
     \label{eq:entropy-two-particles}
     S\left (\ket{\psi^1_t},\ket{\psi^2_t}\right )      &= - \int \! \diff^3 \mathbf {r}_1 \int \diff^3 \mathbf {r}_2\,   \rho_{\mathrm{r}}(\mathbf {r}_1,\mathbf {r}_2 ,t)
     \ln \rho_{\mathrm{r}}(\mathbf {r}_1,\mathbf {r}_2 ,t)
     \\
     & \quad - \int \! \diff^3 \mathbf {k}_1 \int \diff^3 \mathbf {k}_2\,   \rho_{\mathrm{k}}(\mathbf {k}_1,\mathbf {k}_2 ,t)
     \ln \rho_{\mathrm{k}}(\mathbf {k}_1,\mathbf {k}_2 ,t)\, .
  \end{align}
Consider a collision of two particles,  each one described by  an initial coherent state  with position variance $\sigma^2$ centered at $c_1$ and $c_2$  and  moving towards each other along the $x$-axis  with center momenta $ {p}_0=\hbar {k}_0$ and $ - {p}_0$. They can be represented in position and momentum space as
\begin{align}
\Psi_{1}(x,t) &=\frac{\eu^{-\iu k_0 v_p(k_0) \, t}}{Z_1}\,  \normalx{x}{c_1 +v_g(k_0) \, t}{\sigma^2  +\iu \, t\, {\cal H}(k_0) }\eu^{\iu k_0 x} \,,
 \\
\Psi_{2}(x,t) &=\frac{\eu^{-\iu k_0 v_p(k_0) \, t}}{Z_1}\,\,  \normalx{x}{c_2 -v_g(k_0) \, t}{\sigma^2 +\iu \, t\, {\cal H}(-k_0) } \eu^{-\iu k_0 x}\, ,
\\
 \Phi_1(k, t) & =   \frac{\eu^{-\iu t\, v_p(k_0) k_0} }{Z_{k_0}} \, \, \normalx{k}{k_0}{(\sigma^2  +\iu \, t\, {\cal H}(k_0))^{-1} }\, \eu^{\iu (k-k_0)\left (c_1+v_g(k_0) \, t\right)}\,,
\\
 \Phi_2(k, t) & =   \frac{\eu^{-\iu t\, v_p(k_0) k_0} }{Z_{k_0}} \, \, \normalx{k}{-k_0}{(\sigma^2  +\iu \, t\, {\cal H}(-k_0))^{-1} }\, \eu^{\iu (k+k_0) \left (c_2-v_g(k_0) \, t\right)}\, .
\label{eq:two-coherent-states}
\end{align}
\begin{figure}
 \centering
\includegraphics[scale=0.4]{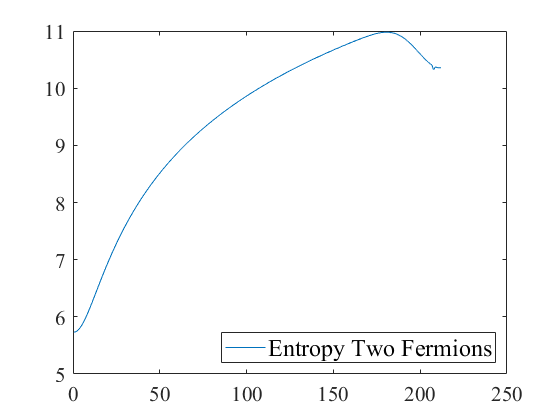}\includegraphics[scale=0.4]{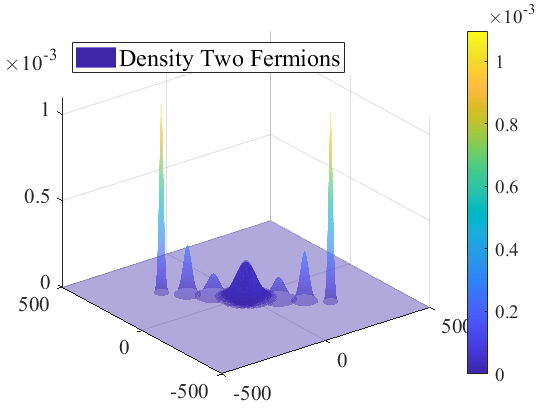}
 \\
 \quad {\small (a) $\frac{\hbar}{m}=1$ : Entropy vs. time;   \;   $\rho_{x}(x_1,x_2 ,t)$  overlaid over time}
 \\
 \includegraphics[scale=0.4]{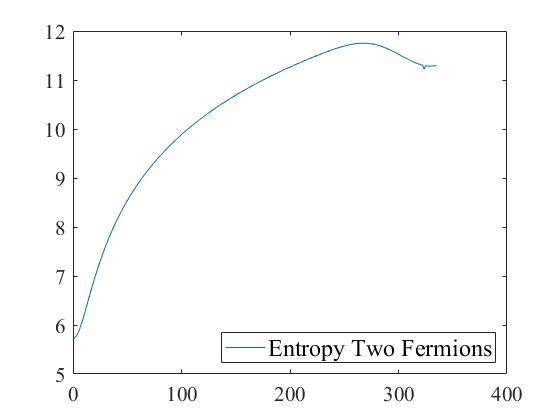}\includegraphics[scale=0.4]{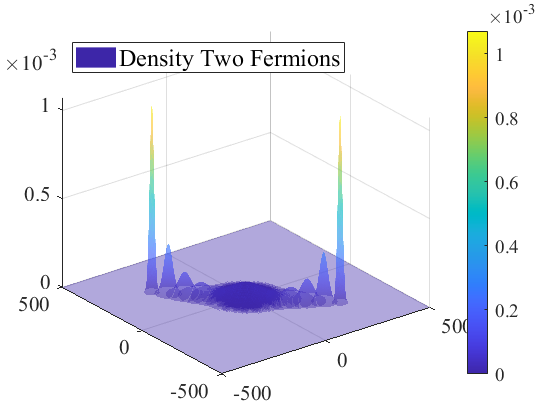}
 \\
 \quad {\small (b) $\frac{\hbar}{m}=0.5$ : Entropy vs. time;  \;  $\rho_{x}(x_1,x_2 ,t)$  overlaid over time}
 \\
 \caption{Collision of two fermions with individual amplitudes \eqref{eq:two-coherent-states}, parameters $k_0=1$, $c_2=-c_1=300$, speed of light $c=1$, a grid of $1\,000$ points for $x_1, x_2,k_1,k_2$. The left column shows entropy vs. time. The right column shows snapshots of the density at initial time, final time, and intervals of 100 time units,  overlaid on single plots. The $z$-axis represents the density, and the $x$-$y$ axes represent the $x_1$-$x_2$ values. As the particles approach each other, their individual densities disperse, the maximum values are reduced, and the entropy increases. Only when the particles  are close to each other, the interference reduces the total entropy.}
 \label{fig:two-particle-collision-entropy}
\end{figure}

Figure~\ref{fig:two-particle-collision-entropy} shows that when  the two particles  are far apart, the  entropy of the system is close to the sum of the two individual  entropies, with each one increasing over time.
The spatial entanglement decreases the uncertainty, and therefore the entropy too.
The competition between the increase of the entropy of the individual particles and the decrease of the entropy due to entanglement results in an oscillation and the decrease in the total entropy when the two particles are close to each other.

\section{An Entropy Law and a Time Arrow}

In classical statistical \mechanics, the entropy provides a time arrow through the second law of thermodynamics \cite{clausius1867mechanical}.  We have  shown that due to the dispersion property of the fermions Hamiltonian some states in quantum \mechanics, such as  coherent states,  already obey such a law.   However, current quantum physics is described as time reversible. In \cite{GeigerKedem2021b} we conjectured the following

\begin{law*}[The Entropy Law]
 \label{postulate:1}
 The entropy of an isolated quantum system is an increasing function of time.
\end{law*}
It is an information-theoretic conjecture about isolated quantum states, whereby information (the inverse of the entropy) cannot be gained.  This law implies a time arrow for quantum states, noting that states where the time evolution of the entropy is constant may be reversible. 
 
Moreover, this law would imply that  a quantum state whose  evolution is governed by an entropy that oscillates, would be blocked from entering a time interval of decreasing entropy.  Instead, a   collapse to a new state with increasing entropy would occur, where the state is one of the superposition states selected according to its  probability. This is the case for the scenario studied in section dedicated to the experiment with a cavity and an atom.
The law may help explain why particles are created and/or annihilated in scenarios such as
high-speed collision $\ee^+ + \ee^- \rightarrow 2 \upgamma$, kaons decay into mesons, and photon creation and emission when  the electron in the hydrogen atom transitions from an excited state  to the ground state. In those scenarios, while
such final states are reachable in a unitary evolution of the initial state, it seems that only those evolutions in which entropy increase are realized.
According to the S-matrix formulation \cite{weinberg1995quantum1}, similar to Fermi's golden rule in QM, these final states are among the  superposition states and are selected according to the probability of the superposition.  These scenarios suggest that the  creation and/or annihilation  of a particles occur when
the entropy of the evolution from the initial to the final state is oscillating and it occurs to interrupt the oscillation before the entropy would decrease.

\section{Conclusions}
\label{sec:conclusion}

The concept of entropy in quantum phase spaces proposed in ~\cite{GeigerKedem2021b}  were further developed here. 
We analyzed the entropy evolution  through the partition of \QCurves into the  four sets   \setConstant,  \setIncreasing , \setOscillating, \setDecreasing. We showed that   the  Dirac's Hamiltonian  disperses  information due to its positive Hessian, causing  coherent states time evolution to  increase entropy. We proved that Time Reflection transforms \QCurves in \setConstant,  \setIncreasing , \setOscillating, \setDecreasing into  \QCurves in \setConstant,  \setDecreasing , \setOscillating, \setIncreasing, respectively.  We proved that  an initial eigenstate of a Hamiltonian evolving with  the addition of a Hamiltonian term not only causes  a state oscillation (as suggested by Fermi's golden rule when the appropriate approximations hold) but also  causes entropy oscillation.  We studied collisions of two particles, each evolving as a coherent state, and  showed that as they come closer to each other the total system's entropy oscillates. The results are applicable to both the Quantum Mechanics (QM) and the Quantum Field Theory (QFT) settings, but we  generally presented them in the more convenient setting. 

We reviewed the conjectured entropy law ~\cite{GeigerKedem2021b} which assigns a time arrow.  We wonder if the  entropy for all quantum states under a unitary evolution is  increasing (such as the case for coherent states under free Dirac Hamiltonian evolution), and the time reverse process would be discarded by this law. Or, whether there are quantum states that under a unitary evolution have the entropy oscillating. In this case, this law would block such evolution when the entropy would decrease triggering the collapse of the state to a new state where the entropy increases under unitary evolution.  We analysed experiments with atoms in a cavity where an apparent oscillation occurs. We argued that instead, that the high entropy photon emitted is absorbed by the cavity with a new creation of a low entropy photon that is fed back to the atom and we speculated that in that scenario the entropy is  always increasing.

\section{Acknowledgement} This paper is partially based upon work supported by both the National Science Foundation under Grant No.~DMS-1439786 and the Simons Foundation Institute Grant Award ID 507536 while the first author was in residence at the Institute for Computational and Experimental Research in Mathematics in Providence, RI, during the spring 2019 semester ``Computer Vision''  program.

 \bibliographystyle{abbrv}
 \bibliography{gk01}

\end{document}